\definecolor{darkgreen}{rgb}{0.0,0,0.9}
\newtheorem{theorem}{Theorem}[section]
\newtheorem{lemma}{Lemma}[section]
\newcommand{\up}{\texttt{upper}}
\newcommand{\lo}{\texttt{lower}}
\newcommand{\A}{\mathcal{A}_e}
\newcommand{\Av}{\mathcal{A}_v}
\newcommand{\I}{\mathtt{MWIS}}
\newtheorem{myremark}{Remark}{\bfseries}{\itshape}
\title{Polygon Simplification by Minimizing Convex Corners\thanks{A preliminary version of this paper appeared in proceedings of the 22nd International Computing and Combinatorics Conference (COCOON 2016)~\cite{BahooDK0MM16}. Research of Stephane Durocher and Debajyoti Mondal is supported in part by Natural Sciences and Engineering Research Council of Canada (NSERC).}}
\author[1]{Yeganeh Bahoo}
\author[1]{Stephane Durocher}
\author[2]{J. Mark Keil}
\author[2]{Debajyoti Mondal}
\author[3]{Saeed Mehrabi}
\author[4]{Sahar Mehrpour}
\affil[1]{{\small Department of Computer Science, University of Manitoba, Winnipeg, Canada.

			\texttt{\{bahoo, durocher\}@cs.umanitoba.ca}}
}
\affil[2]{{\small Department of Computer Science, University of Saskatchewan, Saskatoon, Canada.

\texttt{\{keil, dmondal\}@cs.usask.ca}}
}
\affil[3]{{\small School of Computer Science, Carleton University, Ottawa, Canada.

			\texttt{saeed.mehrabi@carleton.ca}}
}
\affil[4]{{\small School of Computing, University of Utah, Utah, USA.

			\texttt{mehrpour@cs.utah.edu}}
}
\date{}
\begin{document}

\maketitle

\begin{abstract}
Let $P$ be a polygon  with  $r>0$ reflex vertices and possibly with holes and islands. A subsuming polygon of $P$ is a polygon $P'$ such that $P \subseteq P'$, each connected component $R$ of $P$ is a subset of a distinct connected component $R'$ of $P'$, and the reflex corners of $R$ coincide with those of $R'$. A subsuming chain of $P'$ is a minimal path on the boundary of $P'$ whose two end edges coincide with two edges of $P$. Aichholzer et al. proved that every polygon $P$ has a subsuming polygon with $O(r)$ vertices, and posed an open problem to determine the computational complexity of computing subsuming polygons with the minimum number of convex vertices. 

We prove that the problem of computing an optimal subsuming polygon is NP-complete, but the complexity remains open for simple polygons (i.e., polygons without holes). Our NP-hardness result holds even when the subsuming chains are restricted to have constant length and lie on the arrangement of lines determined by the edges of the input polygon. We show that this    restriction makes the problem polynomial-time solvable for simple polygons.
\end{abstract}

\section{Introduction}
\label{sec:introduction}
Polygon simplification is well studied in computational geometry, with numerous applications in cartographic visualization, computer graphics and data compression~\cite{cartogram,Ratschek}. Techniques for simplifying polygons and polylines have appeared in the literature in various forms. Common goals of these simplification algorithms include to preserve the shape of the polygon, to reduce the number of vertices, to reduce the space requirements, and to remove  noise (extraneous bends) from the polygon boundary (e.g., \cite{ArgeDMRT12,Douglas73,GuibasHMS93}). In this paper we consider a specific version of polygon simplification introduced by Aichholzer et al.~\cite{AichholzerHKPV14}, which keeps reflex corners intact, but minimizes the number of convex corners. Aichholzer et al. showed that such a simplification can help achieve faster solutions for many geometric problems such as answering shortest path queries (as shortest paths stay the same), computing Voronoi diagrams, and so on.

A simple polygon is a connected region without holes. Let $P$ be a polygon with $r$ reflex vertices and possibly with holes and islands. An \emph{island} is a simple polygon that lies entirely inside a hole. A \emph{reflex corner} of $P$ consists of three consecutive vertices $u,v,w$ on the boundary of $P$ such that the angle $\angle uvw$ inside $P$ is more than $180^\circ$. We refer the vertex $v$ as a \emph{reflex} vertex of $P$. The vertices  of $P$ that are not reflex are called \emph{convex} vertices. By a \emph{component} of $P$, we refer to a maximally connected region of $P$. A polygon \emph{$P'$ subsumes $P$}  if $P \subseteq P'$, each component $R'$ of $P'$ subsumes a distinct component $R$ of $P$, i.e., $R\subseteq R'$, and the reflex corners of $R$ coincide with the reflex corners of $R'$. A \emph{$k$-convex subsuming polygon} $P'$ contains at most $k$ convex vertices. A \emph{min-convex subsuming polygon} is a subsuming polygon that minimizes the number of convex vertices. Figure~\ref{f:subsume}(a) illustrates a polygon $P$, and Figures~\ref{f:subsume}(b) and (c) illustrate a subsuming polygon and a min-convex subsuming polygon of $P$, respectively. A \emph{subsuming chain} of $P'$ is a minimal path on the boundary of $P'$ whose end edges coincide with a pair of edges of $P$, as shown in Figure~\ref{f:subsume}(b). 

\begin{figure}[pt]
    \centering
    \includegraphics[width=0.9\textwidth]{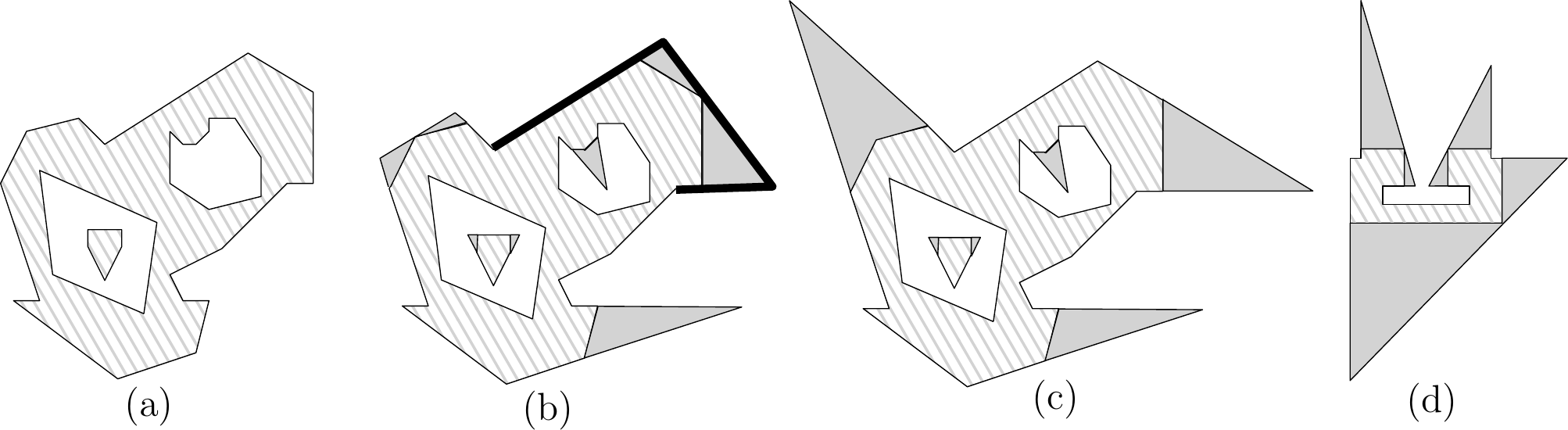}
    \caption{(a) A polygon $P$, where the polygon is filled and the holes are empty regions.
	(b) A subsuming polygon $P'$, where  $P'$ is the union of the filled regions. A 
	subsuming chain is shown in bold. 
	(c) A min-convex subsuming polygon $P'_{min}$, where $\A(P'_{min}) = \A(P)$. (d) A polygon $P$ 
	such that for any min-convex subsuming polygon $P'_{min}$, $\A(P)\not = \A(P'_{min})$. }
    \label{f:subsume}
\end{figure}

 Aichholzer et al.~\cite{AichholzerHKPV14} showed that for every polygon $P$ with $n$ vertices,
 $r>0$ of which are reflex, one can compute in linear time a subsuming polygon $P'$ 
 with at most  $O(r)$ vertices. Note that although a subsuming polygon with $O(r)$
 vertices always exists, no polynomial-time  algorithm is known for computing a
 min-convex subsuming polygon. Finding an optimal subsuming polygon seems challenging since
 it does not always lie on the arrangement of lines $\A(P)$ (resp., $\Av(P)$) determined by the edges (resp.,
 pairs of vertices) of the input polygon. Figure~\ref{f:subsume}(c)
 illustrates an optimal polygon $P'_{min}$ for the polygon $P$ of Figure~\ref{f:subsume}(a), 
 where $\A(P'_{min})=\A(P)$. On the other hand, Figure~\ref{f:subsume}(d) shows that
 a min-convex subsuming polygon  may not always lie on $\A(P)$ or $\Av(P)$. 
 Note that the input polygon of Figure~\ref{f:subsume}(d) 
 is a \emph{simple polygon}, i.e., it does not contain any hole. Hence determining min-convex 
 subsuming polygons seems challenging even for  simple polygons. In fact, 
 Aichholzer et al.~\cite{AichholzerHKPV14} posed an open question that asks to determine the
 complexity of computing min-convex subsuming polygons, where the input is restricted to simple polygons.  

\paragraph{Our contributions.} In this paper we show that the problem of computing a min-convex subsuming polygon is NP-hard for polygons  possibly with holes (Section~\ref{sec:negative}). We noted earlier that discretizing the solution space is a potential challenge, i.e., that the optimal  polygon may not always lie on the line arrangement determined by the input polygon (Figure~\ref{f:subsume}(d)). Interestingly, our NP-hardness result does not seem to utilize this challenge, instead, the hardness holds even when we restrict the subsuming chains to have constant length and to lie on $\A(P)$. 

A question that naturally appears in this context is whether such restrictions on subsuming chains can make the problem easier for nontrivial classes of polygons. For example, consider an $x$-monotone polygon\footnote{$P$ is $x$-monotone if every vertical line intersects $P$ at most twice.} $P$, e.g., see Figure~\ref{fig:m}(a). Then it is not difficult to see that there exists a min-convex polygon such that each subsuming chain has constant length and lies on $\A(P)$. The argument is simple except for the subsuming chains that covers the  two ends of $P$, e.g., see Figure~\ref{fig:m}(b). A simple proof is included in Section~\ref{sec:m}. 

 \begin{figure}[t]
    \centering
    \includegraphics[width=0.7\textwidth]{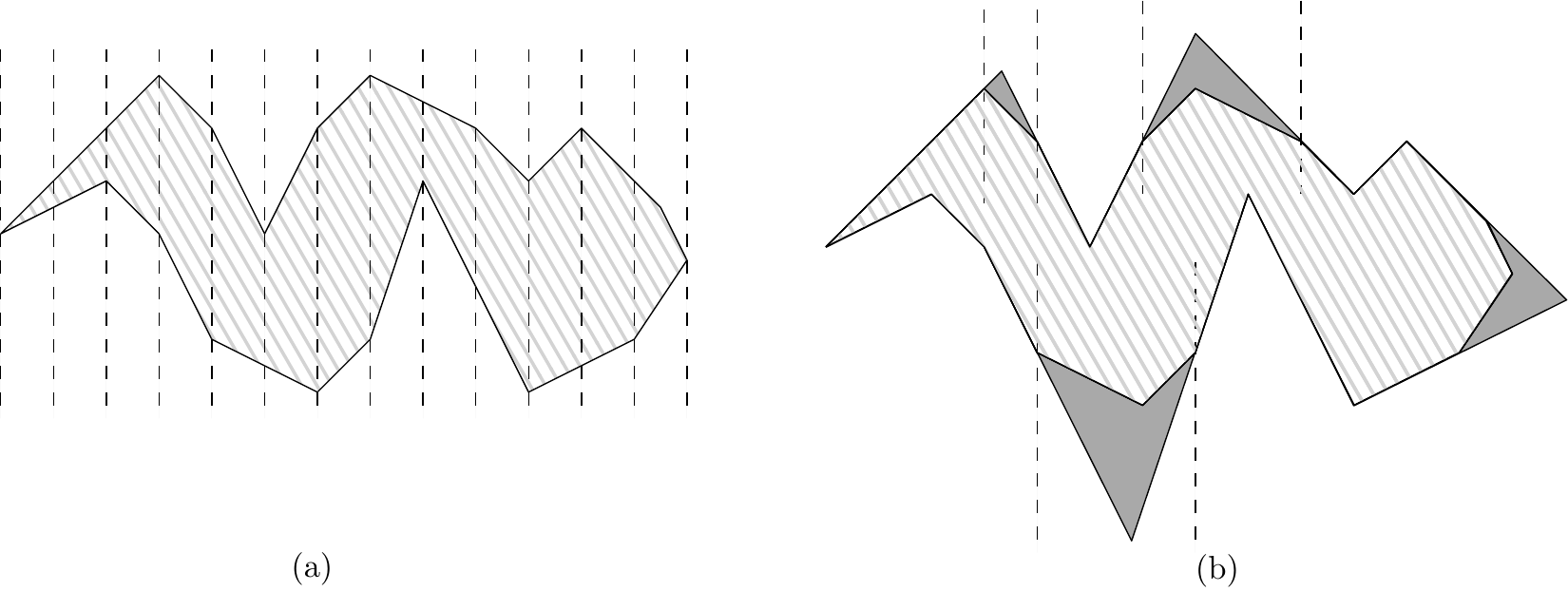}
    \caption{(a) An $x$-monotone polygon $P$. The dashed lines illustrate the monotonicity. (b) A min-convex subsuming polygon of $P$. The dashed lines illustrate the separation between successive chains.}
    \label{fig:m}
\end{figure}

We then show that the question can be answered affirmatively for arbitrary  simple polygons, i.e., for any simple polygon $P$, one can compute in polynomial time, a min-convex subsuming polygon $P_{min}$ under the restriction that the subsuming chains  are of constant length and lie on $\A(P)$. 

\paragraph{Organization.} The rest of the paper is organized as follows. Section~\ref{sec:negative} 
 presents the NP-hardness result for polygons with holes.  Section~\ref{sec:m} presents our observations on monotone polygons. Section~\ref{sec:positive}   describes the techniques for computing subsuming polygons for simple polygons. Finally, Section~\ref{sec:conclusion} concludes the  paper discussing directions to future research.

\section{NP-hardness of Min-Convex Subsuming Polygon}
\label{sec:negative}
In this section we  prove that it is NP-hard to find a subsuming polygon 
 with minimum number of convex vertices. We denote the problem by \textsc{Min-Convex-Subsuming-Polygon}.
 We reduce the  NP-complete problem monotone planar 3-SAT~\cite{BK12}, 
 which is a variation of the 3-SAT problem as follows: Every clause  in a monotone planar 
 3-SAT consists of either three negated variables (\emph{negative clause}) or three non-negated 
 variables (\emph{positive clause}). Furthermore,  the bipartite graph constructed from the 
 variable-clause incidences, admits a planar drawing such that all the vertices corresponding 
 to the variables  lie along a horizontal straight line $l$, and all the vertices corresponding 
 to the positive (respectively, negative) clauses lie above (respectively, below) $l$. The 
 problem remains NP-hard even when each variable appears in at most four clauses~\cite{Darmann}. 

 \begin{figure}[t]
    \centering
    \includegraphics[width=0.9\textwidth]{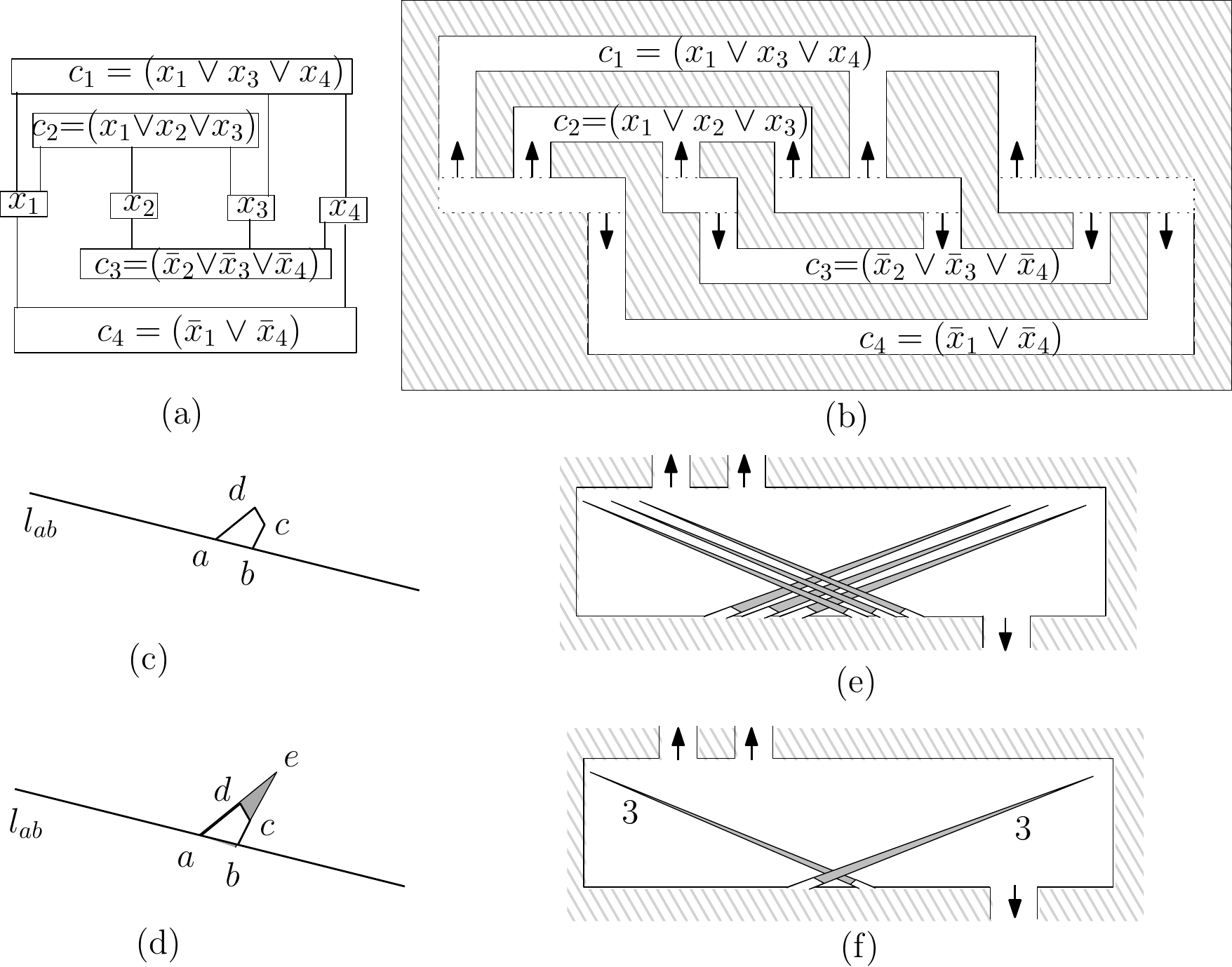}
    \caption{(a) An instance $I$ of monotone planar 3-SAT. (b) The 
    orthogonal polygon $P_o$ corresponding to $I$.
    (c)--(f) Illustration 
     for the variable gadget.}
    \label{f:nph}
\end{figure}
 
The idea of the reduction is as follows. Given an instance of a monotone  planar 
 3-SAT $I$ with variable set $X$ and clause set $C$, we  create a corresponding 
 instance $\mathcal{P}_I$ of \textsc{Min-Convex-Subsuming-Polygon}. 
 Let $\lambda$ be the number of convex vertices in $\mathcal{P}_I$. The reduction ensures that if 
 there exists a satisfying truth assignment of $I$, then $\mathcal{P}_I$ can be subsumed 
 by a polygon with at most $\lambda {-} |X||C|^2{-}3|C|$ convex vertices, and vice versa.

Given an instance $I$ of monotone planar 3-SAT, we first construct an orthogonal polygon $P_o$ with holes. 
 We denote each clause and variable using a distinct axis-aligned rectangle, which we refer to as the 
 \emph{c-rectangle} and \emph{v-rectangle}, respectively. Each edge connecting a 
 clause and a variable is  represented as a thin vertical strip, which we call an \emph{edge tunnel}.
 Figures~\ref{f:nph}(a) and (b) illustrate an instance of monotone planar 3-SAT and the corresponding
 orthogonal polygon, respectively. While adding the edge tunnels, we ensure for each v-rectangle that the tunnels coming from top lie to the left of all the tunnels coming from the bottom. Figure~\ref{f:nph}(b) marks the top and bottom edge tunnels by upward and downward rays, respectively. The v-rectangles, c-rectangles and the edge tunnels may form one or more holes, as it is shown by diagonal line pattern in Figure~\ref{f:nph}. We now transform $P_{o}$ to an instance $\mathcal{P}_I$ of \textsc{Min-Convex-Subsuming-Polygon}.

 We first introduce a few notations. 
 Let  $abcd$ be a convex quadrangle and let $l_{ab}$ be an infinite line that passes through $a$ and $b$. 
  Assume also that $l_{bc}$ and $l_{ad}$ intersect at some point $e$, and $c,d,e$ all lie on the 
  same side of $l_{ab}$, as shown in Figures~\ref{f:nph}(c)--(d).
  Then we call the quadrangle $abcd$ a \emph{tip} on $l_{ab}$, and the triangle $cde$ a \emph{cap} of $abcd$. The tip $abcd$ is called \emph{top-right}, if the slope of $ad$ is positive; otherwise, it is called a \emph{top-left} tip.
 
\paragraph{Variable gadget.} We construct variable gadgets from the v-rectangles. We add some top-right (and the same number of top-left) tips at the bottom side of the v-rectangle, as show in Figure~\ref{f:nph}(e). 
 There are three top-right and top-left tips in the figure. For convenience we show only one top-left
 and one top-right tip in the schematic representation, as shown in Figure~\ref{f:nph}(f).
 However,  we assign weight to these tips to denote how many tips there should be 
 in the exact construction. We will ensure a few more properties: (I) The caps do not intersect the boundary of 
 the v-rectangle, (II) no two top-left caps (or, top-right caps) intersect, and  
 (III) every top-left (resp., top-right) cap intersects all the top-right (resp., top-left) caps. 
 
Observe that each top-left tip contributes to two convex vertices such that covering them with a cap reduces the 
  number of convex vertices by 1. The peak of the cap reaches very close to the top-left corner of 
  the v-rectangle, which will later interfere with the clause gadget. Specifically, this cap will intersect
  any downward cap of the clause gadget coming through the top edge tunnels. Similarly, each 
  top-right tip contributes to two convex vertices, and the corresponding cap intersects any
  upward cap coming through the bottom edge tunnels.

Note that the optimal subsuming polygon $P$ cannot contain the caps from both the 
 top-left and top-right tips. We assign the tips with a weight of $|C|^2$.  
 In the hardness proof this will ensure that either the caps of top-right tips
 or the caps of top-left tips must exist in $P$, which will correspond
 to the true and false configurations, respectively.
 
\paragraph{Clause gadget.} Recall that, by definition, each clause consists of three variables and so it is incident to three edge tunnels. Figure~\ref{f:nphc}(a)  illustrates the transformation for a c-rectangle. 
 Here we describe the gadget for the positive clauses, and the construction for negative clauses is symmetric. 
 We add three downward tips incident to the top side of the c-rectangle, along its three edge tunnels.
 Each of these downward tip contributes to two convex vertices such that covering the tip with 
 a cap reduces the number of convex vertices by 1. Besides, the corresponding caps reach almost to the bottom 
 side of the v-rectangles, i.e., they would intersect the top-left caps of the $v$-rectangles. 
 Let these tips be $t_1,t_2,t_3$ from left to right, and let $\gamma_1,\gamma_2,\gamma_3$ be the corresponding caps.

 \begin{figure}[t]
    \centering
    \includegraphics[width=0.85\textwidth]{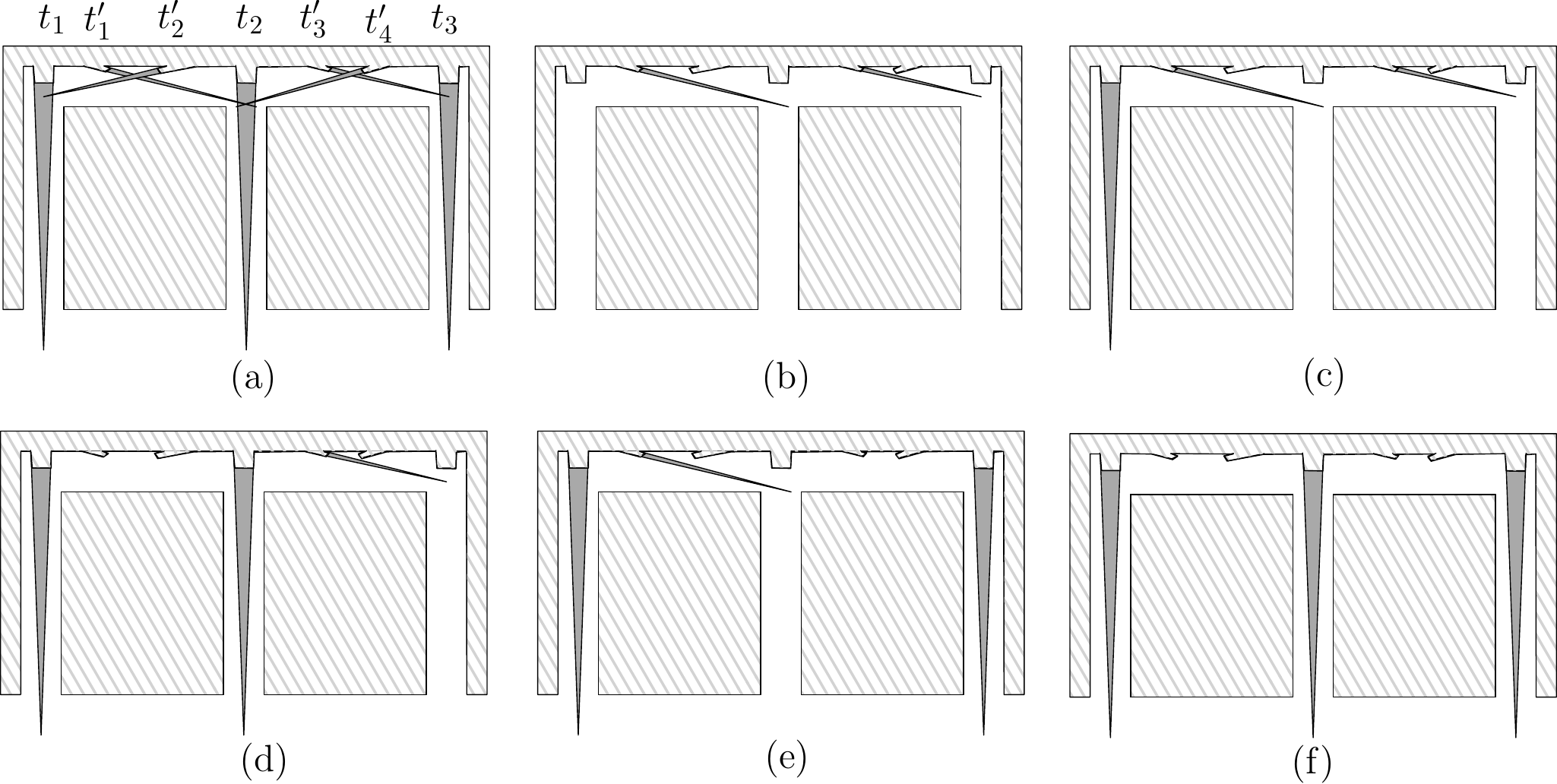}
    \caption{Illustration for the clause gadget.}
    \label{f:nphc}
\end{figure}
 
 We then add a down-left and a down-right tip at the top side of the $c$-rectangle between 
 $t_i$ and $t_{i+1}$, where $1\le i\le 2$, as shown in Figure~\ref{f:nphc}(a). Let the tips 
 be $t'_1,\ldots,t'_4$ from left to right, and let the corresponding caps be 
 $\gamma'_1,\ldots, \gamma'_4$. Note that the caps corresponding to $t'_j$ and  
 $t'_{j+1}$, where $j\in\{1,3\}$, intersect each other. Therefore, at most two of these four caps
 can exist at the same time in the solution polygon. Observe also that the caps
 corresponding to $t_1,t_2,t_3$ intersect the caps corresponding to $\{t'_2\},\{t'_1,t'_4\},
 \{t'_3\}$, respectively. Consequently, any optimal solution
 polygon containing none of $\{\gamma_1,\gamma_2,\gamma_3\}$ has at least 12 convex vertices along the top boundary of the c-rectangle, as shown in Figure~\ref{f:nphc}(b).
 
 We now show that any optimal solution polygon $P$ containing
 at least $\alpha>0$ caps from $\Gamma = \{\gamma_1,\gamma_2,\gamma_3\}$ have exactly 11 convex vertices 
 along the top boundary of the c-rectangle. We consider the following three cases:
 
 \textbf{Case 1 ($\alpha=1$):} If $\gamma_1$ (resp., $\gamma_3$) is in $P$,
  then $P$ must contain $\{\gamma'_1,\gamma'_3\}$ (resp., $\{\gamma'_2,\gamma'_4\}$). 
  Figure~\ref{f:nphc}(c) illustrates the case when $P$ contains $\gamma_1$.
  If $\gamma_2$  is in $P$, then $P$ must contain $\{\gamma'_2,\gamma'_3\}$.
  In all the above scenarios the number of convex vertices along the top boundary of the c-rectangle is 11.

  \textbf{Case 2 ($\alpha=2$):} If $P$ contains $\{\gamma_1,\gamma_3\}$, 
  then either $\gamma'_1$ or $\gamma'_4$ must be in $P$. Otherwise,
  $P$ contains either $\{\gamma_1,\gamma_2\}$ or $\{\gamma_2,\gamma_3\}$.
  If that $P$ contains 
  $\{\gamma_1,\gamma_2\}$, as in Figure~\ref{f:nphc}(d), then 
  $\gamma'_3$ must lie in $P$. In the remaining case, $\gamma'_2$ must lie in $P$.
  Therefore, also in this case the number of convex vertices along the top boundary of the c-rectangle  is 11.  

  \textbf{Case 3 ($\alpha=3$):} In this scenario $P$ cannot contain any of $\gamma'_1,
  \ldots, \gamma'_4$. Therefore, as shown in Figure~\ref{f:nphc}(e), the number of convex vertices along the top boundary of the c-rectangle  is 11.

As a consequence we obtain the following lemma.
\begin{lemma}
\label{lem:savings}
 If a clause is satisfied, then any optimal subsuming polygon  reduces exactly  
 three convex vertex from the corresponding c-rectangle.
\end{lemma}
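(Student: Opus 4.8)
The plan is to reduce the lemma to two quantitative facts about the top boundary of a (positive) c-rectangle: that an optimal subsuming polygon covers at most three of its seven downward tips with caps, and that it covers a full three precisely when the clause is satisfied. Since the uncapped top boundary carries fourteen convex vertices (seven tips, two each) and each cap removes exactly one, the twelve vertices of the $\alpha=0$ analysis and the eleven of Cases~1--3 are reductions of two and three respectively, so ``reduction of exactly three'' is synonymous with ``exactly three caps survive''.

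First I would make the satisfaction--placement dictionary precise. For a positive clause the caps $\gamma_1,\gamma_2,\gamma_3$ rise through the three top edge tunnels, and each $\gamma_i$ is obstructed only by the top-left cap of the v-rectangle at the far end of its tunnel. By the variable gadget, the weight $|C|^2$ together with properties~(I)--(III) forces an optimal polygon to take either all top-right or all top-left caps of each v-rectangle; reading \emph{true} as the top-right choice, the top-left cap of the variable $x_i$ is absent exactly when $x_i$ is true, hence exactly when $\gamma_i$ is free to be inserted. Thus the clause is satisfied if and only if at least one $\gamma_i$ is unobstructed, i.e. if and only if $\alpha\ge 1$ is attainable. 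I would then upgrade this to a statement about optimal polygons by a local-exchange argument: if the clause were satisfied yet some optimal $P$ had $\alpha=0$ on this rectangle, I could insert an unobstructed $\gamma_i$ and re-choose the $\gamma'$ caps exactly as in Cases~1--3 to reach eleven convex vertices. Because the $\gamma_i$ live inside their own tunnels (whose far ends are clear, the blocking v-cap being absent) and the $\gamma'$ caps stay within the c-rectangle, this change is confined to the gadget and strictly lowers the global count from at least twelve, contradicting optimality; hence any optimal $P$ has $\alpha\ge 1$ here.

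It remains to pin the count at exactly eleven, for which I would lean on the case analysis: the pairwise conflicts $\gamma'_1$--$\gamma'_2$ and $\gamma'_3$--$\gamma'_4$ allow at most two $\gamma'$ caps, and once any $\gamma_i$ is present the cross-conflicts of $\gamma_1,\gamma_2,\gamma_3$ with $\gamma'_2$, with $\{\gamma'_1,\gamma'_4\}$, and with $\gamma'_3$ leave room for only one more, so every $\alpha\in\{1,2,3\}$ yields three caps and eleven vertices. The step I expect to be most delicate is the tightness of this bound in the configuration $\{\gamma_1,\gamma_3\}$: the conflicts listed so far leave both $\gamma'_1$ and $\gamma'_4$ nominally available, so to forbid a fourth cap (which would give ten vertices and wreck the accounting) I must check geometrically that $\gamma'_1$ and $\gamma'_4$, both leaning toward the then-uncapped tip $t_2$, necessarily overlap and hence cannot coexist. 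Once this extra conflict is confirmed, every case gives exactly three caps, the reduction is exactly three, and the negative-clause gadget follows by the symmetric argument.
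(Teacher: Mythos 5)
Your proposal follows essentially the same route as the paper: the lemma is stated there ``as a consequence'' of the immediately preceding case analysis ($\alpha=0$ gives at least twelve convex vertices on the top boundary of the c-rectangle, $\alpha\in\{1,2,3\}$ gives exactly eleven), which is exactly the dichotomy you use, together with the same cap-counting and the same satisfaction--placement dictionary. You are more careful than the paper in two places it leaves implicit --- the exchange argument showing an optimal polygon cannot leave a satisfied clause at $\alpha=0$, and the observation that Case~2 with $\{\gamma_1,\gamma_3\}$ tacitly requires $\gamma'_1$ and $\gamma'_4$ to conflict geometrically (otherwise four caps would yield ten vertices and break the ``exactly three'' count) --- and both points are correctly identified and resolvable from the construction.
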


\paragraph{Reduction.} Although we have already described the variable and clause gadgets, the optimal subsuming polygon still may come up with some unexpected optimization that interferes with the convex corner count in our hardness proof. 
 Figure~\ref{f:caution}(left) illustrates one such example. Therefore, we replace 
 each convex corner that does not correspond to the tips by a small polyline with 
 alternating convex and reflex corners, as shown   Figure~\ref{f:caution}(right).
 By construction, it is now straightforward to observe the following fact. 
 
\begin{myremark}\label{rmk2}
Let $r$ be a reflex vertex of $\mathcal{P}_I$, and let $r'$ be the first reflex vertex after $r$ while walking clockwise on the boundary of $\mathcal{P}_I$ starting at $r$. Then the number of convex vertices that can appear between  $r$ and $r'$ is at most two. Furthermore, if there are two convex vertices, then either they correspond to a tip, or form an   $180^\circ$ turn using only orthogonal line segments.   
\end{myremark}

 We now prove the NP-hardness of computing an optimal subsuming polygon.
 \begin{figure}[t]
    \centering
    \includegraphics[width=0.8\textwidth]{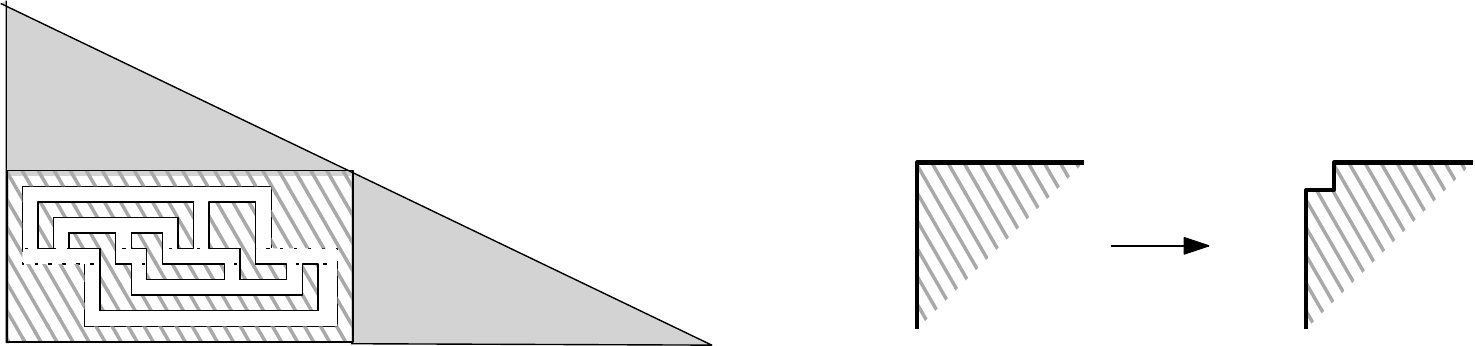}
    \caption{Refinement of $\mathcal{P}_I$.}
    \label{f:caution}
 \end{figure}

\begin{theorem}
\label{thm:hard}
 Finding an optimal subsuming polygon is NP-hard.
\end{theorem}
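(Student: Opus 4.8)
The plan is to establish a correspondence between satisfying truth assignments of the monotone planar 3-SAT instance $I$ and subsuming polygons of $\mathcal{P}_I$ that meet the target convex-vertex bound $\lambda - |X||C|^2 - 3|C|$. Since the gadgets have already been constructed and analyzed, the theorem reduces to proving the two-way implication promised in the reduction: $I$ is satisfiable if and only if $\mathcal{P}_I$ admits a subsuming polygon with at most $\lambda - |X||C|^2 - 3|C|$ convex vertices. I would state and prove each direction separately, leaning on Lemma~\ref{lem:savings} and Remark~\ref{rmk2} throughout.

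\medskip

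First I would prove the forward direction (satisfiable $\Rightarrow$ good polygon). Given a satisfying assignment, I build a subsuming polygon as follows. For each variable $x$, if $x$ is \emph{true} I insert the caps of all top-right tips, and if $x$ is \emph{false} I insert the caps of all top-left tips; since each v-rectangle carries tips of total weight $|C|^2$ and capping reduces the convex count by one per tip, this saves exactly $|X||C|^2$ convex vertices (one full side's worth per variable). The geometry guarantees this is consistent: a true variable's top-right caps block the upward clause caps coming through the bottom tunnels, while leaving the top tunnels clear, and symmetrically for false. Then, because the assignment satisfies every clause, each clause has at least one literal whose corresponding cap $\gamma_i \in \Gamma$ can be inserted without colliding with the (now-fixed) variable cap in its tunnel; by Lemma~\ref{lem:savings} this lets me realize the $11$-vertex configuration along that c-rectangle's top boundary rather than the $12$-vertex one, saving $3$ convex vertices per clause (the reduction from $12$ to $11$ accounts for how the normalization counts the three saved tips). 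Summing gives the required total saving of $|X||C|^2 + 3|C|$.

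\medskip

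For the reverse direction (good polygon $\Rightarrow$ satisfiable), I would argue that any subsuming polygon meeting the bound must have a very rigid structure, forced by the weights. The key observation is that the weight $|C|^2$ on the variable tips dominates: because there are only $3|C|$ convex vertices to be saved from all clause gadgets combined, no optimal solution can afford to split a variable's caps between top-left and top-right (doing so would forgo a weight-$|C|^2$ saving that the clause savings can never recover, as $|C|^2 > 3|C|$ for $|C|$ large enough, which I would note holds without loss of generality). Hence each variable is forced into an all-top-right or all-top-left configuration, which I read off as a truth assignment. Given that each variable saving is fully realized, the total budget forces \emph{every} clause gadget to achieve its $11$-vertex optimum, i.e.\ to contain at least one cap from its $\Gamma$ set; but a cap $\gamma_i$ can coexist with the variable caps only when the corresponding literal is true, so every clause has a true literal and the assignment satisfies $I$. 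Here Remark~\ref{rmk2} is essential: it certifies that no unexpected convex-vertex cancellations occur between consecutive reflex vertices except at tips or at orthogonal $180^\circ$ turns, so the only way to save convex vertices is through the intended cap insertions, and the counting argument is airtight.

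\medskip

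\textbf{The main obstacle} I anticipate is the reverse direction's case analysis ruling out ``cheating'' solutions — configurations that save convex vertices through interactions the gadgets did not intend, or through partial/mixed cappings of a variable that still somehow meet the bound. The weight-$|C|^2$ separation argument is the crux that forecloses mixed variable states, and Remark~\ref{rmk2} is precisely the tool that forecloses spurious savings elsewhere; I would devote the most care to showing these two facts together pin the solution down to exactly the intended combinatorial structure, so that the convex-vertex count cannot reach the target unless a genuine satisfying assignment underlies it. Establishing membership in NP is routine (a subsuming polygon is a polynomial-size certificate verifiable in polynomial time), so the theorem upgrades to NP-completeness once hardness is shown.
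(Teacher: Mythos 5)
Your proposal is correct and follows essentially the same reduction argument as the paper: the forward direction selects top-right or top-left caps per the assignment and invokes Lemma~\ref{lem:savings} for the per-clause saving of three, and the reverse direction uses the same budget counting (clauses contribute at most $3|C|$, so each v-rectangle must contribute its full $|C|^2$, forcing an unmixed configuration) together with Remark~\ref{rmk2} to rule out unintended savings. The only cosmetic difference is your added remark on NP membership and the slightly different (but equivalent) phrasing of why mixed variable states are excluded.
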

\begin{proof}
 Let $I=(X,C)$ be an instance of monotone planar 3-SAT and let $\mathcal{P}_I$ be the 
  corresponding instance of \textsc{Min-Convex-Subsuming-Polygon}. Let $\lambda$ be the number of convex vertices in $\mathcal{P}_I$. We now show that $I$ admits a satisfying truth assignment if and only if  $\mathcal{P}_I$ can be subsumed using a 
  polygon having at most $\lambda - |X||C|^2-3|C|$ convex vertices.
  
First assume that $I$ admits a satisfying truth assignment. 
 For each variable $x$, we choose  either the top-right caps or 
 the top-left caps depending on whether $x$ is assigned true or false.
 Consequently, we save at least $|X||C|^2$ convex vertices. 
 Consider any clause $c\in C$. Since $c$ is satisfied, one or more of its variables
 are assigned true. Therefore, for each positive (resp., negative) clause, we can
 have  one or more downward (resp., upward) caps that enter into the v-rectangles.
 By Lemma~\ref{lem:savings}, we can save at least three
 convex vertices from each c-rectangle. Therefore,
 we can find a subsuming polygon with  at most $\lambda - |X||C|^2-3|C|$ convex vertices.
 
Assume now that  some polygon $P$ with at most $\lambda - |X||C|^2-3|C|$ convex vertices
 can subsume $\mathcal{P}_I$. We now find a satisfying truth assignment for $I$. By Remark~\ref{rmk2} we can restrict our attention only to c- and v-rectangles. 
 Note that the maximum number of convex vertices that can be reduced from the c-rectangles 
 is at most $3|C|$. Therefore, $P$ must reduce at least $|C|^2$ convex vertices from
 each v-rectangle. Recall that in each v-rectangle, either the top-right or the top-left
 caps can be chosen in the solution, but not both. Therefore, the v-rectangles cannot
 help reducing more than $|X||C|^2$ convex vertices. If $P$ contains the top-right caps of the
 v-rectangle, then we set the corresponding variable to true, otherwise, we set it to false. 
 Since $P$ has at most $\lambda - |X||C|^2-3|C|$ convex vertices, and each c-rectangle
 can help to reduce at most $3$ convex vertices (Lemma~\ref{lem:savings}), $P$ must have at least
 one cap from $\gamma_1,\gamma_2,\gamma_3$ at each c-rectangle. Therefore, each clause must 
 be satisfied. Recall that the downward (resp., upward) caps coming from edge tunnels 
 are designed carefully to have conflict with
 the top-left (resp., top-right) caps of v-variables. Since top-left and top-right caps of v-variables
 are conflicting, the truth assignment of each variable  is consistent in  
 all the clauses  that contains it.
\end{proof}

It is straightforward from the construction of  $\mathcal{P}_I$ that no optimal subsuming polygon $P$  that subsumes  $\mathcal{P}_I$ can have a subsuming chain of length larger than 3, and there always exists  an optimal solution that lies on $\A(\mathcal{P}_I)$. Hence, Theorem~\ref{thm:hard} holds even under the  restriction that the subsuming chains must be of constant length and lie on $\A(\mathcal{P}_I)$.  
 In Sections~\ref{sec:m} and~\ref{sec:positive}, we will show that these restrictions make the problem polynomial-time solvable for simple polygons.

\section{Monotone Polygons}
\label{sec:m} 
In this section, we give a straightforward algorithm to compute a min-convex subsuming polygon of $x$-monotone polygons. In fact, we prove a stronger claim that every $x$-monotone polygon $P$ admits a min-convex subsuming polygon such that the subsuming chains are of constant length and lie on $\A(P)$.

Let $\up(P)$ and $\lo(P)$ denote the upper and lower chains of $P$, respectively. Moreover, let $u_1$ (resp., $u_m$) be the leftmost (resp., rightmost) vertex of $P$; notice that the vertices $u_1$ and $u_m$ are both convex (as $P$ is $x$-monotone) and are shared between $\up(P)$ and $\lo(P)$. Let $u_2,\dots,u_{m-1}$ (resp., $l_2,\dots,l_{m'-1}$) be the set of reflex vertices of $P$ that lie on $\up(P)$ (resp., on $\lo(P)$); we let $l_1=u_1$ and $l_{m'}=u_m$. For a reflex vertex $u_i$, where $2\leq i<m$, let $\ell^-(u_i)$ (resp., $\ell^+(u_i)$) denote the line  determined by the edge $(u_{i-1},u_i)$ (resp., $(u_{i},u_{i+1})$). Similarly, define the lines $\ell^-(l_i)$ and $\ell^+(l_i)$ for all $2\leq i<m'$. For $u_1$ and $u_m$, only $\ell^+(u_1)$ and $\ell^-(u_m)$ are defined. We next describe the algorithm.

First, consider $\up(P)$. Initially, let the simplified polygon $P'$ be $P$. For each reflex vertex $u_i$ on $\up(P)$, where $2\leq i<m-1$, consider the vertical slab defined by the two vertical lines through $u_i$ and $u_{i+1}$. If there is no convex vertex of $\up(P)$ in this slab, then the edge $u_iu_{i+1}$ must be an edge of any feasible solution. So, such an edge stays in $P'$. Otherwise, $\ell^+(u_i)$ and $\ell^-(u_{i+1})$ intersect each other at some point $p$ outside and above $P$ or on $P$. Then, we remove the chain of convex vertices between $u_i$ and $u_{i+1}$ and add the two edges $u_ip$ and $u_{i+1}p$ to $P'$; hence, reducing the number of convex vertices of $\up(P)$ between $u_i$ and $u_{i+1}$ to one. Next, we consider $\lo(P)$ and apply the same process to every two each vertex $l_i$ on $\lo(P)$, where $2\leq i<m'-1$.

It remains to show how to deal with the convex vertices that appear before the leftmost reflex vertex or after the rightmost reflex vertex on each chain. We show that these convex vertices can be reduced to at most two convex vertices, depending on the slopes of the edges incident to such \emph{reflex} vertices. In the following, as the second part of the algorithm, we discuss the details for convex vertices that appear before the leftmost reflex vertices; the convex vertices on the other end of the polygon can be handled similarly.

Consider $u_1$ (i.e., the leftmost vertex of $P$) and let $u_r$ and $l_r$ denote the leftmost reflex vertices on $\up(P)$ and $\lo(P)$, respectively. Let $\pi$ be the chain on the boundary of $P$ that connects $u_r$ to $l_r$ in counter-clockwise order (i.e., it contains $u_1$). To reduce the number of convex vertices on $\pi$, it is sufficient to check if $\ell^-(u_r)$ and $\ell^-(l_r)$ intersect at some point $p$ whose $x$-coordinate is less than that of both $u_r$ and $l_r$. For instance, if the slope of $\ell^-(u_r)$ is positive and the slope of $\ell^-(l_r)$ is negative, then $\ell^-(u_r)$ and $\ell^-(l_r)$ intersect at such point $p$; see Figure~\ref{fig:endvertices}(a). In this case, we can replace $\pi$ with two edges $u_rp$ and $l_rp$; hence, reducing the number of convex vertices on $\pi$ to one. Therefore, we can simplify $\pi$ as follows: if $\ell^-(u_r)$ and $\ell^-(l_r)$ intersect at such point $p$, then we replace $\pi$ with two edges $u_rp$ and $l_rp$ (hence, reducing the number of convex vertices on $\pi$ to one). Otherwise, if no such point $p$ exists (e.g., when the slope of $\ell^-(u_r)$ is negative, but the slope of $\ell^-(l_r)$ is positive), then both $\ell^-(u_r)$ and $\ell^-(l_r)$ must intersect the line passing through at least one of the edges incident to $u_1$. See Figure~\ref{fig:endvertices}(b-c). So, we can replace $\pi$ with three edges and reducing the convex vertices on $\pi$ to two. We perform a similar reduction on the path $\pi$ corresponding to the rightmost convex vertex of $P$ and its ``closest'' reflex vertices from each chain on the other end of $P$. Let $P^*$ be the resulting simplified polygon.

\begin{figure}[t]
\centering
\includegraphics[width=0.80\textwidth]{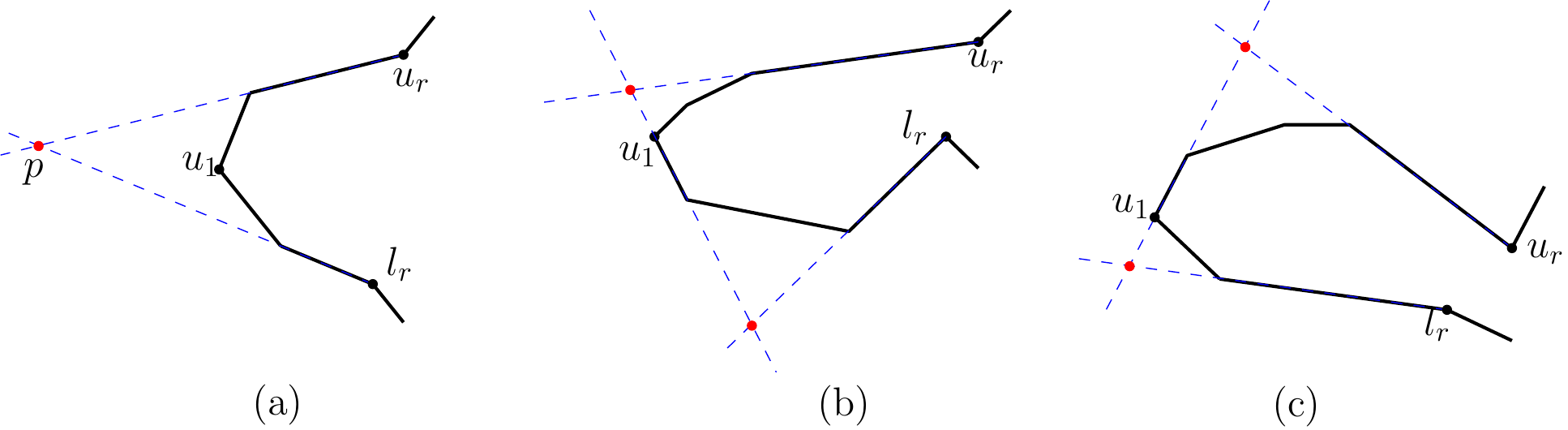}
\caption{How to reduce the convex vertices around the leftmost vertex $u_1$ of $P$.}
\label{fig:endvertices}
\end{figure}

To see the monotonicity of $P^*$, we note that in each slab considered in the first part of the algorithm, at most two new edges are introduced that lie inside the slab. Therefore, the edges from different slabs are disjoint. Moreover, the edges introduced in the second part of the algorithm (i.e., when dealing with the leftmost and rightmost convex vertices $u_1$ and $u_m$) do not violate the monotonicity of $P^*$.

For every two consecutive reflex vertices, $P^*$ has exactly one convex vertex (resp., has no convex vertex) between them if $P$ had at least one (resp., had none) between them. Since there must be at least one convex vertex between every two reflex vertices that had at least one convex vertex between them, any simplified polygon must have at least as many convex vertices as $P^*$ between every two consecutive reflex vertices. Moreover, one can easily verify that $P^*$ has the minimum number of convex vertices generated in the second part of the algorithm. Therefore, $P^*$ is optimal and we hence have the following result.
\begin{theorem}
Given a monotone polygon $P$ with $n$ vertices, a subsuming polygon of $P$ with the minimum number of convex vertices can be computed in $O(n \log n)$ time.
\end{theorem}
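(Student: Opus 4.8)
The plan is to prove three things about the polygon $P^*$ produced by the two-phase algorithm: that it is a genuine subsuming polygon of $P$, that it uses the fewest possible convex vertices, and that it can be built in $O(n\log n)$ time. Since the construction has already been spelled out, the proof reduces to verifying these claims and accounting for the cost of each step. For legitimacy I would first check $P\subseteq P^*$ and that every reflex corner survives. Each local modification replaces a maximal chain of convex vertices lying strictly between two consecutive reflex vertices (or, in the second phase, the chain around $u_1$ or $u_m$) by at most two edges whose supporting lines are $\ell^+(\cdot)$ and $\ell^-(\cdot)$ of the enclosing reflex vertices; since these are exactly the edge directions of $P$ at those reflex vertices, the location and both incident edge directions of every reflex vertex are untouched, so all reflex corners coincide, and since the inserted apex $p$ lies on or outside $P$ on the appropriate side, the boundary moves only outward, giving $P\subseteq P^*$. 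Monotonicity and simplicity I would take from the slab argument already given: the two new edges for a pair $(u_i,u_{i+1})$ stay inside the vertical slab spanned by $u_i$ and $u_{i+1}$, consecutive slabs meet only along a bounding vertical line, and the two terminal gadgets sit to the left of $u_1$ and to the right of $u_m$, so $\up(P^*)$ and $\lo(P^*)$ stay $x$-monotone and touch only at the two extreme apices.

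The heart of the argument is optimality, which I would prove interval by interval. Fix an arbitrary subsuming polygon $Q$. Because its reflex corners coincide with those of $P$, the reflex vertices are common to $P$ and $Q$ and occur in the same cyclic order, so they partition the boundary of $Q$ into arcs matching the intervals between consecutive reflex vertices, and each convex vertex of $Q$ belongs to exactly one arc. On an interval bounded by two reflex vertices that originally enclosed at least one convex vertex, the boundary of $Q$ must leave the first along $\ell^+(\cdot)$ and enter the second along $\ell^-(\cdot)$. These two lines are distinct, for otherwise the original convex sub-chain would have been a single straight edge with no convex vertex, a contradiction; hence $Q$ cannot bridge the interval by one straight segment and must spend at least one convex vertex there, exactly matching $P^*$. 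On an interval that was already straight, $P^*$ spends none. Summing these tight per-interval bounds (which add since the arcs are disjoint) shows that no subsuming polygon can beat $P^*$ on the interior intervals.

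For the two terminal intervals I would invoke the slope case analysis of Figure~\ref{fig:endvertices}: one convex vertex is attainable precisely when $\ell^-(u_r)$ and $\ell^-(l_r)$ (and symmetrically at the right end) cross to the outside of $P$, and otherwise at least two are forced because the turn from one extreme reflex-edge direction to the other cannot be realized with fewer bends while staying outside $P$; $P^*$ realizes this minimum by construction, so the terminal contributions are also tight. Combining the interior and terminal lower bounds yields global optimality. For the running time, classifying vertices as reflex or convex and splitting the boundary into $\up(P)$ and $\lo(P)$ with the reflex vertices in $x$-order takes $O(n)$ given the vertices in boundary order; allowing for an initial sort by $x$-coordinate to set up the slabs accounts for the $O(n\log n)$ term, while each consecutive pair of reflex vertices is processed in $O(1)$ time to compute its apex $p$, the deletions of convex sub-chains telescope to $O(n)$ in total, and each terminal gadget costs $O(1)$. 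I expect the optimality step to be the main obstacle, in particular making the terminal lower bound rigorous and reconciling it with the interior bound so that the per-interval minima provably sum to the global minimum; legitimacy and the time bound are routine by comparison.
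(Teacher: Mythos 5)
Your proposal is correct and follows essentially the same route as the paper: the same two-phase construction, the same per-interval lower bound (any subsuming polygon must spend at least one convex vertex between consecutive reflex vertices that originally enclosed one, since the incident edge directions at the two reflex vertices are non-collinear), and the same slope case analysis for the two terminal chains. If anything, you supply slightly more justification than the paper does for the interior lower bound, and you correctly identify the terminal-chain lower bound as the step the paper also leaves at the level of ``one can easily verify.''
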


\section{Computing Subsuming Polygons}
\label{sec:positive}
In this section, we show that for any simple polygon $P$, a min-convex subsuming polygon $P_{min}$ can be computed in polynomial time under the restriction that 
 the subsuming chains  are of constant length and lie on $\A(P)$. 
  We first present definitions and preliminary results on outerstring graphs, which will be an important tool for computing subsuming polygons. 

\paragraph{Independent set in outerstring graphs.} A graph $G$ is   a \emph{string graph} if it is an intersection graph of a 
 set of simple curves in the plane, i.e., each vertex of $G$ is a mapped to a curve (string),
 and two vertices are adjacent in $G$ if and only if the corresponding curves intersect. 
 $G$ is an \emph{outerstring graph} if the underlying  curves lie interior to a simple 
 cycle $C$, where each curve intersects $C$ at one of its  endpoints. 
 Figure~\ref{f:positive}(a) illustrates an outerstring graph and the corresponding 
 arrangement of curves. Later in our algorithm, the polygon will correspond to 
 the cycle of an outerstring graph, and some polygonal chains attached to the 
 boundary of the polygon will correspond to the strings of that outerstring graph.

 A set of strings is called \emph{independent} if no two strings in the set intersect,
 the corresponding vertices in $G$ are called an independent set of vertices.  
 Let $G$ be a weighted outerstring graph with a set $\mathcal{T}$ of weighted strings. A \emph{maximum weight independent set} $\I(\mathcal{T})$  (resp., $\I(G)$) is a set of independent strings $T\subseteq \mathcal{T}$ (resp., vertices) that maximizes the sum of the weights of the strings in $T$. By $|\I(G)|$ we denote the weight of $\I(G)$.

Let $\Gamma(G)$ be the arrangement of curves that corresponds to $G$; e.g., see Figure~\ref{f:positive}(a). Let $R$ be a geometric representation of $\Gamma(G)$, where $C$ is represented as a simple polygon $P$, and each curve is represented as a simple polygonal chain inside $P$ such that one of its endpoints coincides with a distinct vertex of $P$. Keil et al.~\cite{Keil15} showed that given a geometric representation $R$ of $G$, one can compute a maximum weight independent set of $G$ in $O(s^3)$ time, where $s$ is the number of line segments in $R$.

\begin{theorem}[Keil et al.~\cite{Keil15}]
\label{thm:outerstring}
Let $G$ be a weighted outerstring graph.  Given a geometric representation $R$ of $G$, there exists a dynamic programming algorithm that computes a
 maximum weight independent set of $G$ in $O(s^3)$ time, where $s$ is the number of straight line segments in $R$.
\end{theorem}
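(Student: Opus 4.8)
The statement is a cited result, so the plan is to reconstruct the dynamic programming algorithm that underlies it. The input is a geometric representation $R$: a simple polygon $P$ playing the role of the cycle $C$, together with a family of weighted polygonal chains (strings) drawn inside $P$, each having exactly one endpoint on $\partial P$. Write $s$ for the total number of straight segments over all strings. The first step is to fix the cyclic order in which the attachment endpoints occur along $\partial P$, and to collect the set $V$ consisting of all segment endpoints of all strings together with all attachment points; there are $O(s)$ such points. These points will serve as the only admissible delimiters of subproblems.

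The heart of the argument is a recursive decomposition of the interior of $P$ by the strings themselves. I would define a subproblem for each ordered pair $(a,b)$ of points of $V$ such that the straight segment $\overline{ab}$, together with a boundary arc of $\partial P$, bounds a region $R(a,b)\subseteq P$; let $W(a,b)$ be the maximum weight of a set of pairwise non-crossing strings lying entirely inside $R(a,b)$. The key geometric fact to establish is a \emph{separation property}: if a string $t$ is chosen in an optimal solution for $R(a,b)$, then because $t$ is a simple curve running from $\partial P$ into the interior, every other string of the solution lies entirely on one side of $t$; hence $t$ partitions $R(a,b)$ into two subregions, each again bounded by a portion of $\partial P$ and by a sub-chain of $t$, i.e.\ each of the same type as the original subproblem. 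Because a string is a chain of several segments, these subregions are delimited not by arbitrary chords but by prefixes and suffixes of the segments of $t$, which is exactly why the delimiters must be the $O(s)$ points of $V$ rather than only the attachment points.

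With this decomposition in hand, the recurrence writes itself: to evaluate $W(a,b)$, consider the first string endpoint along the boundary arc of $R(a,b)$; either its string is excluded, and we recurse on the slightly smaller region, or it is included, in which case we add its weight and recurse on the two subregions into which it splits $R(a,b)$. Each recursive call is again indexed by a pair of points of $V$, so there are $O(s^2)$ distinct subproblems; evaluating one requires trying $O(s)$ split strings, which yields the claimed $O(s^3)$ running time. Filling the table by increasing enclosed area (equivalently, by the number of enclosed endpoints of $V$) makes the recursion well founded and computable bottom-up.

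The step I expect to be the main obstacle is making the subproblem family precisely closed under the recursion. Because the strings are polygonal chains rather than single segments, a subproblem boundary must be allowed to run along several consecutive segments of a string, and one must argue that whenever a string is selected the two resulting pockets are still each describable by a single pair $(a',b')$ with $a',b'\in V$, with no overlap and no omitted feasible string. Verifying the separation property rigorously — that selected strings never cross and that each lands wholly inside one pocket — together with checking that the $O(s^2)$ states genuinely capture every region that can arise during the recursion, is the crux; once that is in place, the counting that produces the $O(s^3)$ bound is routine.
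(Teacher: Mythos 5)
The statement is quoted from Keil et al.~\cite{Keil15}, and the paper itself only sketches their argument, so reconstructing the algorithm is the right goal. Your state space (ordered pairs drawn from the $O(s)$ segment endpoints and attachment points, giving $O(s^2)$ subproblems with $O(s)$ work each) matches theirs, and your separation property is sound. The gap is in the transition, and it is exactly the one you flag at the end: your subproblem $R(a,b)$ is a region bounded by a \emph{single straight chord} $\overline{ab}$ plus a boundary arc, but when you include a string $t$ --- a polygonal chain of several segments --- each of the two resulting pockets is bounded by a piece of $\overline{ab}$, a sub-chain of $t$, and a boundary arc. Such a pocket is not of the form $R(a',b')$ for any pair $a',b'\in V$, so the subproblem family is not closed under your own recursion. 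The exclude branch has the same defect: discarding the ``first'' string does not produce a smaller region of the declared type; you would have to remember which strings have been discarded, which destroys the $O(s^2)$ state count.

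The idea that closes this gap in Keil et al.\ is the triangulation device the paper alludes to: any independent set of strings can be extended to a constrained triangulation $P_t$ of the free space whose vertices all lie in $V$, so every chord that ever delimits a subproblem is a single straight edge of $P_t$. The recurrence then guesses the apex $w$ of the triangle of $P_t$ incident to the current chord $v_1v_2$; the string containing $w$ (if $w$ is a string point) is forced into the solution, and the two new subproblems are delimited by the straight segments $v_1w$ and $wv_2$, hence again of the declared type. A short case analysis, depending on whether the delimiting points lie on the polygon or on strings, ensures each string's weight is counted exactly once. With that replacement your counting argument goes through and yields the $O(s^3)$ bound; without it, the recursion as you have set it up does not terminate in a well-defined table.
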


 \begin{figure}[t]
    \centering
    \includegraphics[width=0.85\textwidth]{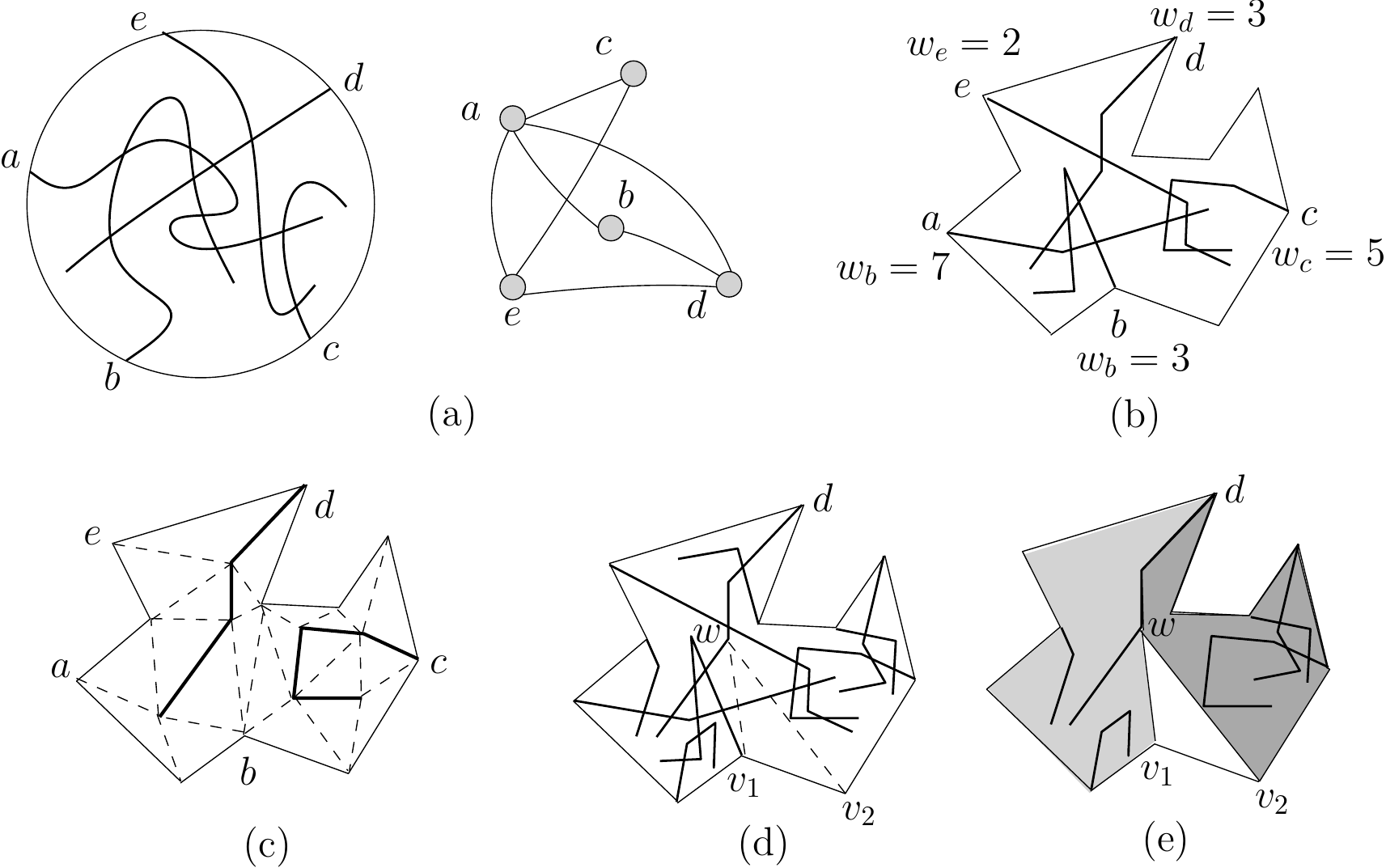}
    \caption{(a) Illustration for $G$ and $\Gamma(G)$. (b) A geometric representation $R$ of $G$.
	(c) A triangulated polygon obtained from an independent set of $G$.
	(d)--(e) Dynamic programming to find maximum weight independent set. }
    \label{f:positive}
\end{figure}

Figure~\ref{f:positive}(b) illustrates a geometric representation $R$ of some $G$,
 where each string is represented with at most 4 segments.  Keil et al.~\cite{Keil15} observed that any 
 maximum weight independent set of strings can be triangulated to create a triangulation $P_t$ of $P$, as shown in Figure~\ref{f:positive}(c). They used this idea to design a dynamic programming algorithm, as follows. Let $\mathcal{T}$ be the strings in $R$.
 Then the problem of finding $\I(\mathcal{T})$ can   be solved by dividing the problem into
  subproblems,  each described using only two points of $R$.
 We illustrate how the subproblems are computed very briefly using Figure~\ref{f:positive}(d). 
 Let $P(v_1,v_2)$ be the problem of finding  $\I(\mathcal{T}_{v_1,v_2})$, where  $\mathcal{T}_{v_1,v_2}$ consists of 
 the strings that lie to the left  of $v_1v_2$. Let $wv_1v_2$ be a triangle in $P_t$,
 where $w$ is a point on some string $d$ inside $P(v_1,v_2)$; see Figure~\ref{f:positive}(d). Since $P_t$ is a triangulation of the 
 maximum weight string set, $d$ must be a string in the optimal solution. Hence 
 $P(v_1,v_2)$ can be computed from the solution to the subproblems $P(v_1,w)$ and $P(w,v_2)$,
 as shown in Figure~\ref{f:positive}(e).
 Keil et al.~\cite{Keil15} showed that there are only a few different cases depending 
 on whether the points describing the subproblems belong to the polygon or 
 the strings.  We will use this idea of computing $\I(\mathcal{T})$ to compute subsuming polygons.

\paragraph{Subsuming polygons via outerstring graphs.} Let $P=(v_0,v_1,\ldots,v_{n-1})$ be a simple polygon with $n$ vertices, $r>0$ of which are reflex vertices. A \emph{convex chain} of  $P$ is a path $C_{ij} = (v_i,v_{i+1},\ldots,v_{j-1},v_j)$ of strictly convex vertices, where the indices are considered modulo $n$.

Let $P' = (w_0, w_1,\ldots, w_{m-1})$ be a subsuming polygon of $P$, where
  $\A(P') = \A(P)$, and the subsuming chains are of length at most $t$. Here, by ``length'', we mean the number of edges (not the Euclidean length).
  Let  $C'_{qr} = (w_q,\ldots, w_r)$ be  a subsuming chain of $P'$. 
  Then by definition, there is a corresponding convex chain $C_{ij}$ in $P$
  such that the edges $(v_i,v_{i+1})$ and $(v_{j-1},v_{j})$ coincide with
  the edges  $(w_q,w_{q+1})$ and $(w_{r-1},w_{r})$. 
  We call the vertex $v_{i}$ the \emph{left support} of $C'_{qr}$.         
  Since $\A(P') = \A(P)$, the chain $C'_{qr}$  must lie on $\A(P)$. 
  Moreover, since $P'$ is a
  min-convex subsuming polygon, the number of vertices in $C'_{qr}$
  would be at most the number of vertices in $C_{ij}$.
  
We  claim that the number of paths in $\A(P)$ from $v_i$ to $v_j$ is at most $n^t$, where $t=O(1)$  is an upper bound on the length of the subsuming chains. Thus any subsuming chain can have at most  $(t-1)$ line segments. Since there are only $O(n)$ straight lines in the arrangement $\A(P)$, there can be at most $n^j$ paths of $j$ edges, where $1\le j \le t-1$. Consequently, the number of candidate  chains that can subsume $C_{ij}$ is  $O(n^{t})$.

\begin{lemma}
Given a simple polygon $P$ with $n$ vertices, 
 every convex chain $C$ of $P$ has at most $O(n^t)$ candidate subsuming chains in $\A(P)$,
 each of length at most $t$.
\end{lemma}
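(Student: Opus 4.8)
The plan is to bound the number of candidate subsuming chains by encoding each one through the sequence of arrangement lines supporting its edges, and then counting such sequences. Fix a convex chain $C = C_{ij}$ with left support $v_i$ and corresponding right support $v_j$. By definition, any candidate subsuming chain $C'$ is a path in $\A(P)$ that starts at $v_i$ and ends at $v_j$ (its first and last edges being prescribed to coincide with $(v_i,v_{i+1})$ and $(v_{j-1},v_j)$), and has at most $t$ edges. Since $t=O(1)$, it suffices to show that for each $k\le t$ the number of such paths with exactly $k$ edges is $O(n^k)$; summing over $1\le k\le t$ then yields $\sum_{k=1}^{t} O(n^k) = O(n^t)$.

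First I would set up the encoding. Let $C'$ have edges $e_1,\dots,e_k$, and, after merging any maximal run of collinear consecutive edges (which never increases the edge count), assume consecutive edges are non-collinear. Because $C'$ lies on $\A(P)$, each edge $e_m$ is contained in a unique line $\ell_m$ of the arrangement. I claim the map $C'\mapsto(\ell_1,\dots,\ell_k)$ is injective once the start vertex $v_i$ is fixed. Indeed, the vertex shared by $e_m$ and $e_{m+1}$ lies on both $\ell_m$ and $\ell_{m+1}$; as two distinct lines meet in at most one point, this vertex must equal $\ell_m\cap\ell_{m+1}$ and is therefore determined by the pair $(\ell_m,\ell_{m+1})$. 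Walking from the fixed start $v_i$ along $\ell_1$ to $\ell_1\cap\ell_2$, then along $\ell_2$, and so on, the entire vertex sequence of $C'$ is recovered from $(\ell_1,\dots,\ell_k)$. Hence distinct candidate chains yield distinct line sequences.

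It then remains to count the sequences. The arrangement $\A(P)$ is determined by the $n$ edges of $P$, so it contains at most $n$ distinct lines. Consequently there are at most $n^k$ ordered $k$-tuples of lines, and thus at most $n^k$ candidate chains with exactly $k$ edges. Summing over $1\le k\le t$ and using $t=O(1)$ gives $O(n^t)$ candidate subsuming chains for $C$, as claimed. (Fixing in addition the first and last lines to the prescribed edge lines would only sharpen the count, but is not needed.)

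I expect the only delicate point to be the injectivity step, specifically the assertion that every turning vertex of a candidate chain is forced to be the intersection of two consecutive supporting lines. This requires the normalization that consecutive edges be non-collinear, so that each turning vertex is a genuine arrangement vertex incident to two \emph{distinct} arrangement lines; once that reduction is in place, uniqueness of the intersection of two distinct lines does the rest, and the counting is routine.
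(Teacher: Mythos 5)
Your proof is correct and follows essentially the same route as the paper, which also counts candidate chains by assigning each of the at most $t$ edges to one of the $O(n)$ lines of $\A(P)$, yielding $O(n^t)$. Your injectivity step (recovering each turning vertex as the intersection of two consecutive, distinct supporting lines after merging collinear runs) simply makes explicit a detail the paper leaves implicit.
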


 \begin{figure}[t]
    \centering
    \includegraphics[width=0.7\textwidth]{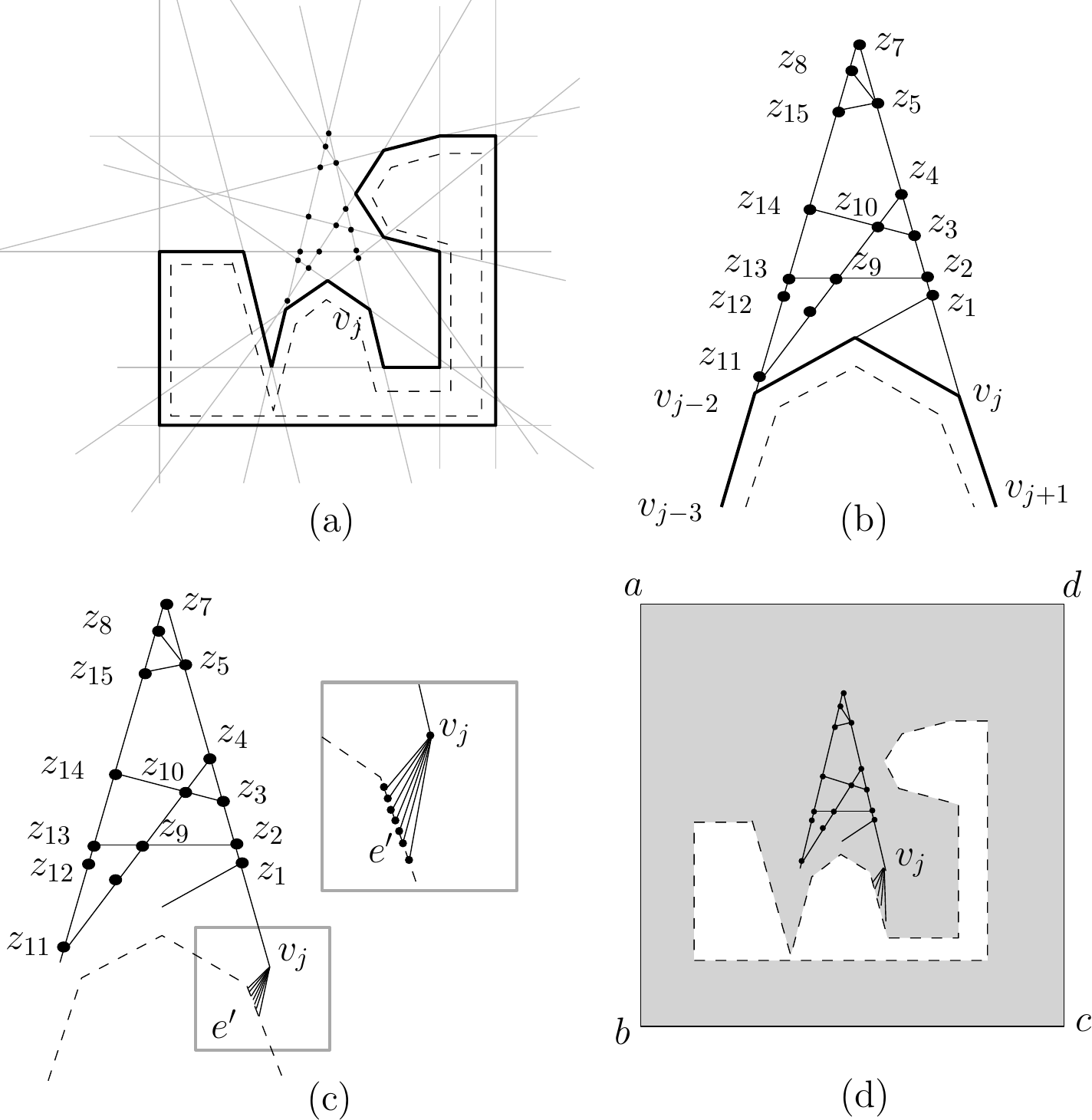}
    \caption{(a) Illustration for the polygon $P$ (in bold), $\A(P)$ (in  gray), and $Q$ (in dashed lines). (b) Chains of $v_j$. (c) Attaching the strings to $Q$. (d) Dynamic programming inside the gray region. }
    \label{f:subsuming}
\end{figure}

  In the following, we construct an outerstring graph using these candidate subsuming chains. We first compute a simple polygon $Q$ interior to $P$ such that  for each edge $e$ in $P$, there exists a corresponding edge $e'$ in $Q$ which is  parallel to $e$ and the perpendicular distance between $e$ and $e'$ is $\epsilon$, as shown in dashed line in Figure~\ref{f:subsuming}(a). We choose $\epsilon$ sufficiently small\footnote{Choose $\epsilon=\delta/3$, where $\delta$ is the distance between the closest visible pair of boundary points.} such that  for each component $w$ of $P$, $Q$ contains exactly one component inside $w$. We now construct the strings. Let $v_j$ be a convex corner of $P$. Let $S_j$ be the set of candidate subsuming chains such that for each chain in $S_j$, the left support of the chain appears before $v_j$ while traversing the unbounded face of $P$ in clockwise order.  For example, the subsuming chains that correspond to $v_j$  are $(v_{j-2},z_1,v_{j+1})$,
 $(v_{j-3},z_{13},z_2,v_{j+1})$,
 $(v_{j-3},z_{14},z_3,v_{j+1})$,
 $(v_{j-3},z_{11},z_4,v_{j+1})$,
 $(v_{j-3},z_{15},z_5,v_{j+1})$,
 $(v_{j-3},z_{8},z_5,v_{j+1})$,
 $(v_{j-3},z_7,v_{j+1})$,
 as shown in Figure~\ref{f:subsuming}(b). For each of these chains, we
 create a unique endpoint on the edge $e'$ of $Q$, where $e'$ corresponds to the edge $v_jv_{j+1}$ in $P$,
 as shown in Figure~\ref{f:subsuming}(c). We then attach these chains  to $Q$ by adding a segment from $v_j$
 to its unique endpoint on $Q$. 
 
 We attach the chains for all the convex vertices of $P$ to $Q$. Later we will use 
 these chains as the strings of an outerstring graph.
 We then assign each chain a weight, which is the number of convex vertices of $P$ it can reduce.
 For example in Figure~\ref{f:subsuming}(b), the weight of the chain $(v_{j-3},z_{8},z_5,v_{j+1})$  is one.

Although the strings are outside of the simple cycle, it is straightforward 
 to construct a representation with all the strings inside a simple cycle $Q$:  Consider placing a dummy vertex at the intersection points of the arrangement, and then find a straight-line  embedding  of the resulting planar graph such that the boundary of $Q$ corresponds to the outer face of the embedding. Consequently,  $Q$ and its associated strings correspond to an outerstring graph representation $R$. Let $G$ be the underlying outerstring graph. We now claim that any $\I(G)$ corresponds to a min-convex subsuming polygon of $P$.
\begin{lemma}
Let $P$ be a simple polygon, where there exists a min-convex subsuming 
 polygon that lies on $\A(P)$, and let $G$ be the corresponding  outerstring graph.
 Any maximum weight independent set of $G$
 yields a min-convex subsuming polygon of $P$.
\end{lemma}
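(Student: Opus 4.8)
The plan is to establish a bijective correspondence between maximum weight independent sets of $G$ and min-convex subsuming polygons of $P$ that lie on $\A(P)$, and then argue that maximizing the weight is equivalent to minimizing the number of convex vertices. Recall that each string in $G$ is a candidate subsuming chain whose weight equals the number of convex vertices it can \emph{reduce}; summing these savings is how we connect the combinatorial optimization to the geometric objective. The key observation to exploit is that two candidate subsuming chains can coexist in a single subsuming polygon if and only if they do not ``conflict,'' and that this conflict is captured exactly by the intersection of the corresponding strings attached to $Q$.

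First I would make the conflict notion precise. A subsuming polygon that lies on $\A(P)$ is obtained from $P$ by replacing some collection of convex chains with subsuming chains; the resulting boundary must be a simple polygon that subsumes $P$ (it contains $P$, shares all reflex corners, and keeps components separated). I would argue that a set of candidate subsuming chains can be simultaneously realized in a valid subsuming polygon precisely when the chains are pairwise non-crossing and non-overlapping --- equivalently, when the associated strings form an independent set in $G$. One direction is easy: if two chains cross, the boundary they produce is not simple, so they cannot both appear. For the converse, I would show that any pairwise-independent family of chains can be stitched together with the untouched edges of $P$ to form a valid simple subsuming polygon; here the small offset polygon $Q$ and the choice of $\epsilon=\delta/3$ guarantee that the attaching segments from each $v_j$ to its endpoint on $Q$ do not introduce spurious intersections and that distinct components stay separated. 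This is the step I expect to be the main obstacle, because one must verify that non-crossing of the \emph{strings} (which include the short attaching segments routed through the thin region between $P$ and $Q$) faithfully encodes non-crossing of the \emph{subsuming chains} themselves, and that no independent family is geometrically infeasible for a reason invisible to the string representation.

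Next I would convert the objective. Let $k$ be the total number of convex vertices of $P$, and observe that any subsuming polygon on $\A(P)$ built from an independent family $T$ of chains has exactly $k - \sum_{C\in T} w(C)$ convex vertices, where $w(C)$ is the weight (savings) of chain $C$; this uses Remark-style bookkeeping that each selected chain removes exactly its advertised number of convex corners and that unselected convex chains contribute their original convex vertices. Consequently minimizing the number of convex vertices over all subsuming polygons on $\A(P)$ is the same as maximizing $\sum_{C\in T} w(C)$ over independent families $T$, which is exactly computing $\I(G)$. Combining this with the realizability equivalence from the previous paragraph yields that $\I(G)$ has weight $k$ minus the minimum convex-vertex count, and that the chains in $\I(G)$ assemble into a subsuming polygon attaining that minimum.

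Finally, I would assemble the pieces. By hypothesis there exists a min-convex subsuming polygon lying on $\A(P)$; by the first equivalence its set of subsuming chains corresponds to an independent set of $G$, and by the objective conversion this set has weight $k$ minus the optimum, so no independent set of $G$ can have larger weight --- hence it is a maximum weight independent set, and conversely any $\I(G)$ attains the same weight and therefore assembles into a subsuming polygon with the minimum number of convex vertices. The only care needed in this wrap-up is to note that the correspondence is between independent sets and subsuming polygons \emph{restricted to $\A(P)$}, which is exactly the regime the lemma assumes, so applying Theorem~\ref{thm:outerstring} to $G$ then yields the desired min-convex subsuming polygon.
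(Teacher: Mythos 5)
Your proposal follows essentially the same route as the paper's own proof: both directions rest on the correspondence between independent string sets and non-crossing families of subsuming chains, together with the observation that the weight of a chain equals the number of convex corners it saves. In fact your write-up is more careful than the paper's two-paragraph argument, which simply asserts the realizability of an independent set as a valid simple subsuming polygon — the step you correctly flag as the one requiring verification.
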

\begin{proof}
Let $T$ be a set of strings that correspond to a maximum weight independent set of  $G$. Since $T$ is an independent set, the corresponding subsuming chains do not create edge crossings. Moreover, since each subsuming chain is weighted by the number of convex corners it can remove, the subsuming chains corresponding to $T$ can remove $|\I(G)|$ convex corners in total.

Assume now that there exists a min-convex subsuming polygon that can remove at least $k$
 convex corners. The corresponding subsuming chains would correspond to an independent 
 set $T'$ of strings in $G$. Since each string is weighted by the number of convex corners the corresponding subsuming chain can remove, the weight of $T'$ would be at least $k$.
\end{proof}

\paragraph{Time complexity.} To construct $G$, we first placed a dummy vertex at the intersection 
 points of  the chains, and then computed a straight-line  embedding of the resulting 
 planar graph such that all the vertices of $Q$ are on the outerface. Therefore, the geometric representation used at most 
 $nt$ edges to represent each  string. Since each convex vertex of $P$ is associated with at most $ O(n^t)$ strings,
 there are at most $n\times O(n^t)$ strings in $G$.  
 Consequently, the total number of segments used in the geometric 
 representation is $O(tn^{2+t})$. 
 A subtle point here is that the strings in our representation may partially overlap, and more than three strings may intersect at one point.
 Removing such degeneracy does not increase the asymptotic size of the representation.
 Finally, by Theorem~\ref{thm:outerstring}, one can compute
 the optimal subsuming polygon in $O(t^3n^{6+3t})$ time.

The complexity can be improved further as follows. Let $abcd$ be a rectangle that contains all the intersection points of $\A(P)$. Then, every optimal solution can be extended to a triangulation of the closed region between $abcd$ and $Q$. Figure~\ref{f:subsuming}(d) illustrates this region in gray. We can now apply dynamic programming similar to the one described above to compute the maximum weight independent string set, where each subproblem finds a maximum weight set inside some subpolygon. Each such subpolygon can be described using two points $v_1,v_2$, each lying either on $Q$ or on some string, and a subset of $\{a,b,c,d\}$ that helps   enclosing the subpolygon.
 
Since there are $n\times O(n^t)$ strings, each containing at most $t$ points, 
 the number of vertices that correspond to the strings is $O(tn^{1+t})$. We will refer these as the \emph{string vertices}. Note that the number of total vertices in the geometric representation is also $O(tn^{1+t})$. If the subproblem is bounded by two string vertices, or one string vertex and one polygon vertex, then  similar to Keil et al.~\cite{Keil15}, we  can use a pair of vertices to describe a subproblem. However, sometimes we need more information to describe a subproblem, e.g., assume that the subproblem is bounded  from one side by the point $a$ and some vertex $v$ (corresponding to a string), and from the other side by the point $d$ and some vertex $v'$ (corresponding to a string). For these problems, we need a subset of $\{a,b,c,d\}$ to describe the problem boundary. Therefore, we 
 define our dynamic programing table to be $\mathcal{D}[x,y,z]$, where $x$ and $y$ corresponds to the string or polygon vertices, and $z$ corresponds to the constant size additional description of the boundary (whenever needed). Thus the size of the dynamic programming table is $O(tn^{1+t})\times O(tn^{1+t})\times O(1)$. Since there are at most $O(tn^{1+t})$ string vertices, there can be at most  $O(tn^{1+t})$ candidate triangles $v_1v_2w$ (e.g., Figure~\ref{f:positive} (e)). Consequently, we can fill an entry of the table in $O(tn^{1+t})$ time. Hence the dynamic program takes at most $O(t^3n^{3+3t})$ time in total.
\begin{theorem}\label{thm:e}
Given a simple polygon $P$ with $n$ vertices, one can compute in polynomial time, a min-convex subsuming polygon under the restriction that the subsuming chains must be of constant length and lie on $\A(P)$.
\end{theorem}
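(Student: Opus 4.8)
The plan is to reduce the restricted min-convex subsuming polygon problem to a maximum weight independent set computation on an outerstring graph, and then invoke the algorithm of Keil et al.\ (Theorem~\ref{thm:outerstring}). The first step is to discretize the solution space. Since the subsuming chains are required to lie on $\A(P)$ and to have constant length $t$, the preceding lemma bounds the number of candidate subsuming chains for each convex chain of $P$ by $O(n^t)$, so enumerating all candidates over all convex chains of $P$ is polynomial. Each candidate is a concrete polygonal path on $\A(P)$ with a well-defined left support, and it can be annotated with the number of convex vertices of $P$ that it removes.

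Next I would realize these candidates as the strings of an outerstring graph, exactly as in the construction preceding the statement: shrink $P$ inward by a sufficiently small $\epsilon$ to obtain an interior polygon $Q$ that preserves the component structure, attach each candidate chain to $Q$ via a segment joining its left support to a distinct endpoint on the corresponding edge of $Q$, and weight the resulting string by the number of convex vertices it removes. Placing dummy vertices at the string crossings and taking a planar straight-line embedding with $Q$ as the outer face yields an outerstring graph $G$ together with a geometric representation $R$. Correctness is then exactly the correspondence lemma already established: a set of pairwise non-crossing subsuming chains corresponds to an independent set of strings, and the total weight equals the number of removed convex vertices, so any $\I(G)$ yields a min-convex subsuming polygon under the stated restriction.

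For the running time I would count the size of $R$: there are $n\cdot O(n^t)=O(n^{1+t})$ strings, each drawn with $O(t)$ segments, so $R$ has $O(t\,n^{2+t})$ segments and $\I(G)$ can be computed in $O(t^3 n^{6+3t})$ time directly by Theorem~\ref{thm:outerstring}; this is polynomial since $t=O(1)$. To sharpen the bound I would replay the dynamic program of Keil et al.\ on $R$, describing each subproblem by two vertices (on $Q$ or on a string) plus a constant-size subset of the corners of a bounding rectangle $abcd$ that encloses all intersection points of $\A(P)$. This gives a table of size $O(t^2 n^{2+2t})$ with $O(t\,n^{1+t})$ candidate apex vertices per entry, for a total of $O(t^3 n^{3+3t})$ time.

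The main obstacle I anticipate is not the algorithmic core but the faithfulness of the geometric reduction. I must argue that every candidate chain can be attached to $Q$ as a simple string lying interior to the cycle, that the intersection pattern among strings is captured \emph{exactly} by the adjacencies of $G$, and that the unavoidable degeneracies---strings overlapping along shared segments of $\A(P)$, and more than three strings passing through a common arrangement vertex---can be perturbed away without changing the independent set structure or increasing the asymptotic size of $R$. Establishing that the shrink parameter $\epsilon$ can be chosen so that $Q$ contains exactly one component per component of $P$ while every string still reaches its left support is the delicate part that makes the outerstring reduction valid.
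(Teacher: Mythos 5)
Your proposal follows essentially the same route as the paper: enumerate the $O(n^t)$ candidate chains per convex chain, attach them as weighted strings to a shrunken copy $Q$ of $P$, invoke the correspondence with maximum weight independent sets in the resulting outerstring graph, and apply the Keil et al.\ dynamic program, first for the $O(t^3 n^{6+3t})$ bound and then sharpened to $O(t^3 n^{3+3t})$ via the bounding-rectangle subproblem description. The degeneracy and attachment issues you flag are exactly the ones the paper notes and dispenses with in the same way, so the argument is correct and matches the paper's proof.
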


\paragraph{Generalizations.} We can further generalize the results for any given line arrangements. However, such a generalization may increase the time complexity. For example, consider the case when the given line arrangement is $\Av(P)$, which is determined by the pairs of vertices of $P$. Since we now have $O(n^2)$ lines in the arrangement $\Av(P)$, the time complexity increases to  $O(t^3(n^2)^{3+3t})$, i.e.,  $O(t^3n^{6+6t})$.

\section{Conclusion}
\label{sec:conclusion}
In this paper, we developed a polynomial-time algorithm that can compute 
 optimal subsuming polygons for a given simple polygon  in restricted settings.
 On the other hand, if the polygon contains holes, then we showed that the problem of computing an optimal subsuming polygon is NP-hard. Therefore, the question of whether the problem is polynomial-time solvable for simple polygons~\cite{AichholzerHKPV14}, remains open. Note that islands are crucial in our hardness proof. The complexity of the problem for polygons with holes (but without any island) is also open. Since the optimization in one hole is independent of the optimization in the other holes of the polygon, resolving the complexity for polygon with holes would readily give important insight about the complexity for simple polygon.

Our algorithm can find an optimal solution if the optimal subsuming polygon lies on some prescribed arrangement of lines, e.g., $\A(P)$ or $\Av(P)$. The running time of our algorithm depends on the length of the subsuming chains, i.e., the running time 
 is polynomial if the subsuming chains are of constant length. However, there exist polygons whose optimal subsuming polygons contain subsuming chains of length $\Omega(n)$. Figure~\ref{f:spiral} illustrates such an example optimal solution that is lying on $\A(P)$. An  interesting research direction would be to examine whether there exists a good approximation algorithm for the general problem.

 Recently, Lubiw et al.~\cite{DBLP:journals/corr/abs-1802-06699} showed that the problem of drawing a graph inside a polygonal region is hard for the existential theory of the reals. The  subsuming polygon problem can also be viewed as a constrained graph drawing problem whereas the subsuming chains are modeled by edges that need to be drawn outside the polygon, possibly with bends. The goal is to find a crossing-free drawing of these edges that minimizes the total number of  bends. It would be interesting to examine whether the problem is $\exists\mathbb{R}$-hard   
 in such a graph drawing model.
 
\begin{figure}[t]
\centering
\includegraphics[width=0.3\textwidth]{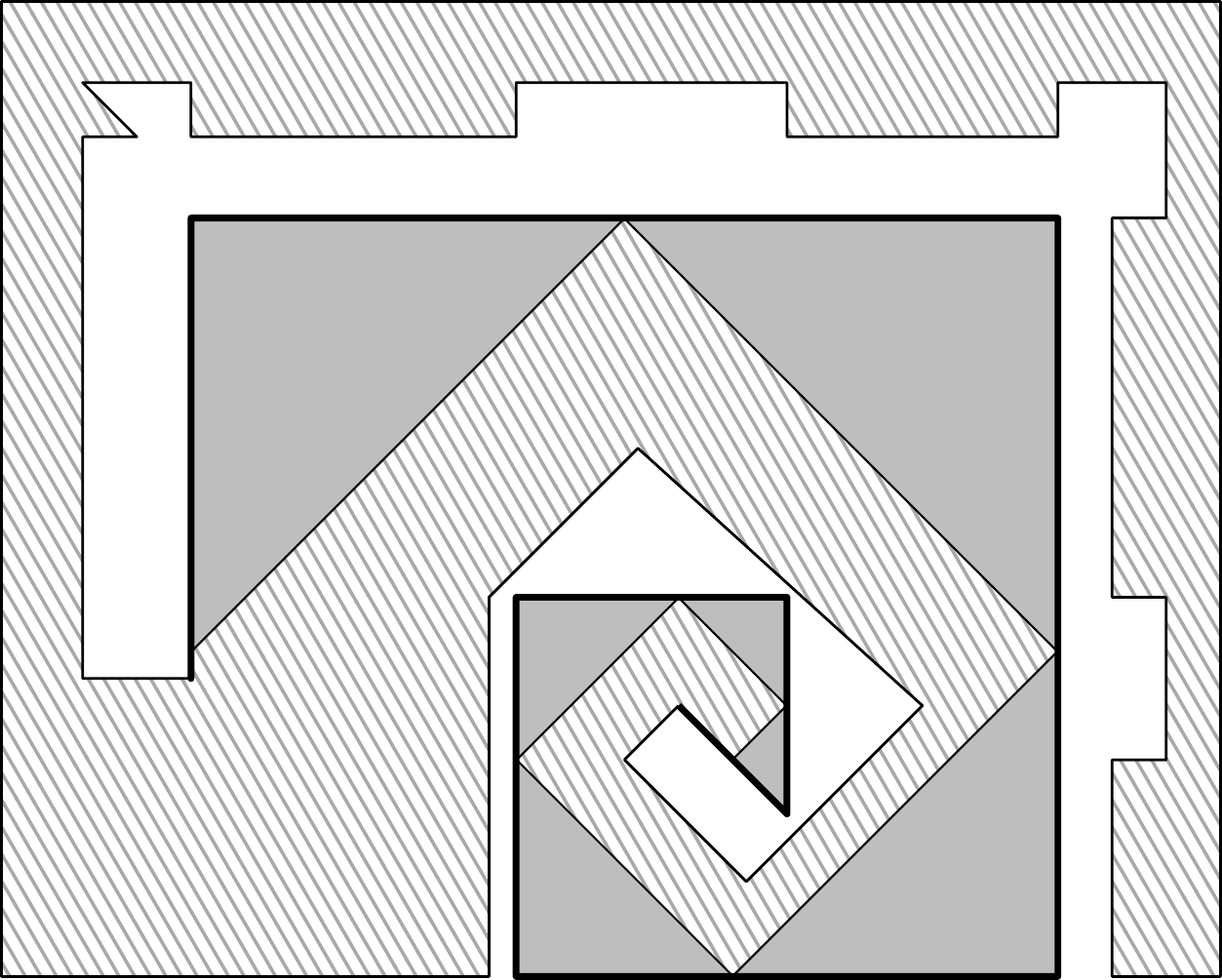}
\caption{Illustration for the case when the optimal subsuming polygon contains a subsuming chain of length $\Omega(n)$. 
The subsuming chain is shown in bold.}
\label{f:spiral}
\end{figure}

\bibliographystyle{plain}
\bibliography{ref}

\end{document}